\newtheorem{prop}{Proposition}[section]
\numberwithin{equation}{section}
\begin{document}
\allowdisplaybreaks

\newcommand{\arXivNumber}{1908.06641}

\renewcommand{\PaperNumber}{004}

\FirstPageHeading

\ShortArticleName{The Arithmetic Geometry of AdS$_2$ and its Continuum Limit}

\ArticleName{The Arithmetic Geometry of AdS$\boldsymbol{{}_2}$\\ and its Continuum Limit}

\Author{Minos AXENIDES~$^\dag$, Emmanuel FLORATOS~$^{\dag\ddag}$ and Stam NICOLIS~$^{\S}$}

\AuthorNameForHeading{M.~Axenides, E.~Floratos and S.~Nicolis}

\Address{$^\dag$~Institute of Nuclear and Particle Physics, NCSR ``Demokritos'',\\
\hphantom{$^\dag$}~Aghia Paraskevi, GR--15310, Greece}
\EmailD{\href{mailto:axenides@inp.demokritos.gr}{axenides@inp.demokritos.gr}}

\Address{$^\ddag$~Physics Department, University of Athens, Zografou University Campus,\\
\hphantom{$^\ddag$}~Athens, GR-15771, Greece}
\EmailD{\href{mailto:mflorato@phys.uoa.gr}{mflorato@phys.uoa.gr}}

\Address{$^\S$~Institut Denis Poisson, Universit\'e de Tours, Universit\'e d'Orl\'eans, CNRS (UMR7013),\\
\hphantom{$^\S$}~Parc Grandmont, 37200 Tours, France}
\EmailD{\href{mailto:stam.nicolis@lmpt.univ-tours.fr}{stam.nicolis@lmpt.univ-tours.fr}}
\URLaddressD{\url{http://www.lmpt.univ-tours.fr/~nicolis/}}

\ArticleDates{Received April 02, 2020, in final form January 02, 2021; Published online January 09, 2021}

\Abstract{According to the 't~Hooft--Susskind holography, the black hole entropy, $S_\mathrm{BH}$, is carried by the chaotic microscopic degrees of freedom, which live in the near horizon region and have a Hilbert space of states of finite dimension $d=\exp(S_\mathrm{BH})$. In previous work we have proposed that the near horizon geometry, when the microscopic degrees of freedom can be resolved, can be described by the AdS$_2[\mathbb{Z}_N]$ discrete, finite and random geometry, where $N\propto S_\mathrm{BH}$.
It has been constructed by purely arithmetic and group theoretical methods and was studied as a toy model for describing the dynamics of single particle probes of the near horizon region of 4d extremal black holes, as well as to explain, in a direct way, the finiteness of the entropy, $S_\mathrm{BH}$. What has been left as an open problem is how the smooth AdS$_2$ geometry can be recovered, in the limit when $N\to\infty$. In the present article we solve this problem, by showing that the discrete and finite AdS$_2[\mathbb{Z}_N]$ geometry can be embedded in a family of finite geometries, AdS$_2^M[\mathbb{Z}_N]$, where $M$ is another integer. This family can be constructed by an appropriate toroidal compactification and discretization of the ambient $(2+1)$-dimensional Minkowski space-time. In this construction $N$ and $M$ can be understood as ``infrared'' and ``ultraviolet'' cutoffs respectively. The above construction enables us to obtain the continuum limit of the AdS$_2^M[\mathbb{Z}_N]$ discrete and finite geometry, by taking both $N$ and $M$ to infinity in a specific correlated way, following a reverse process: Firstly, we show how it is possible to recover the continuous, toroidally compactified, AdS$_2[\mathbb{Z}_N]$ geometry by removing the ultraviolet cutoff; secondly, we show how it is possible to remove the infrared cutoff in a specific decompactification limit, while keeping the radius of AdS$_2$ finite. It is in this way that we recover the standard non-compact AdS$_2$ continuum space-time. This method can be applied directly to higher-dimensional AdS spacetimes.}

\Keywords{arithmetic geometry of AdS$_2$; continuum limit of finite geometries; Fibonacci sequences}

\Classification{14L35; 11D45; 83C57}

\section{Introduction}\label{intro}
\looseness=-1 The present work, mathematically, belongs to the area of algebraic geometry over finite rings. However its relevance for physics stems from the proposal of using specific,
discrete and finite arithmetic geometries, as toy models, in order to describe properties of quantum gravity in general and the structure of space-time, in particular, at distances of the order of the Planck scale ($10^{-33}~\mathrm{cm}$), where the notions of the metric and of the continuity of spacetime break down~\cite{Axenides:2013iwa}.

So it is useful to provide some context, for our study, by presenting a short review of the relevant physics questions about spacetime and quantum gravity. The reader, who is interested only in the mathematical issues of our paper, may skip the following subsection on its physical motivation and resume reading from the subsequent subsection on the outline of the paper.

\subsection{Physics motivation}\label{physicsmot}

At Planck scale energies, quantum mechanics, as we know it from lower energy scales, implies that the notion of spacetime itself becomes ill-defined, through the appearance from the vacuum of real or virtual black holes of Planck length size~\cite{Hawking:1979zw}.

Probing this scale by scattering experiments of any sort of particle-like objects, black holes will be produced and the strength of the gravitational interaction will be of ${\rm O}(1)$, which leads to a breakdown of perturbative gravity and of the usual continuum spacetime description~\cite{Carlip:2009km,Carlip:2011tt}.

 The above remarks led some authors to consider the idea, that one has to abandon continuity of spacetime, locality of interactions and regularity of dynamics.
 Indeed there are recent arguments that quantization of gravity implies discretization and finiteness of space time~\cite{Hooft:2016pmw,Verlinde:2010hp}. This is, indeed, an old idea, that was put forward, already way back, by the founders of quantum physics and gravity.

The most successful and popular framework today, to tackle this fundamental problem, is considered to be the AdS/CFT correspondence. It attempts to define spacetime geometry~-- and thus gravity~-- as an emergent phenomenon, that must and can be described in the language of conformal field theory. The realization of this correspondence has passed many non-trivial consistency checks by explicit calculations, that are valid for distances of the order of $R_{{\rm AdS}_2}$, much larger than the Planck scale in the bulk space-time.

An example of such a non-trivial check consists in providing, on the one hand, the degrees of freedom that can account for the black hole entropy~\cite{Sen:2009vz, Sen:2008vm} and recovering the Bekenstein--Hawking entropy, at length scales much larger than the Planck length, along with a certain class of corrections; on the other hand, in providing a resolution and reformulation of the so-called ``old black hole information paradox''~\cite{Hawking:2005kf,Maldacena:2020ady}.

When the curvature of the bulk spacetime becomes locally of the order of the Planck scale, the holographically dual conformal field theory on the boundary, becomes a free field theory~-- but the complexity of the problem of understanding the space-time geometry and gravity of the bulk appears in the guise of the construction of the infinitely ``complicated'' operators of the free boundary conformal field theory. This is necessary for representing ``local events'' in the bulk, as well as the bulk diffeomorphisms (the so-called ``problem of locality'' in the AdS/CFT correspondence, presented, for instance in~\cite{Almheiri:2014lwa}). This phenomenon is a consequence of the so-called UV/IR correspondence, that is inherent in the AdS/CFT framework. How to resolve it is, at present, under study.

A few years ago the seminal paper~\cite{Almheiri:2012rt} highlighted the relevance of the so-called ``new black hole information paradox''~\cite{Papadodimas:2012aq},
 which finally lead to the conjectures that go under the label ER~=~EPR~\cite{Maldacena:2013xja} and culminate in the so-called QM~=~GR correspondence~\cite{Susskind:2017ney}.

These conjectures relate strongly the description of spacetime geometry and quantum gravity to quantum information theoretic tools, such as entanglement of information, algorithmic complexity,random quantum networks,quantum holography, error correcting codes etc.

On the other hand a direct discussion of the structure of the bulk spacetime at Planck scale lies beyond the present capabilities of the holographic AdS/CFT framework.

We return, in what follows, to the ideas of discretization of spacetime which constitutes the framework of our study.

The hypothesis for a discrete and finite spacetime for quantum gravity is a possible way for explaining the remarkable fact that the Hilbert space of states of the BH microscopic degrees of freedom is finite-dimensional. Its dimensionality equals to the exponential of the Bekenstein--Christodoulou--Hawking black hole entropy, which is of quantum origin. The generalization of the Bekenstein entropy bounds implies that, for any pair of local observers in a general gravitational background, the physics inside their causal diamond is also described by a finite-dimensional Hilbert space of states~\cite{Bousso:2018bli}. This result has been exploited further and consistently under the name of holographic spacetime, in the works~\cite{Banks:2020dus,Bao:2017rnv, Giddings:2012bm}.

Our idea about the nature of spacetime at the Planck scale, takes the notion of a holographic spacetime one step further: Namely, that the finite dimensionality of the Hilbert space of local spacetime regions originates from a discrete and finite spacetime, which underlies the emergent continuous geometric description~\cite{Axenides:2013iwa, Floratos:1989au}.

Our starting point is essentially the hypothesis that spacetime, at the Planck scale, is fundamentally discrete and finite
and, moreover, does not emerge from any other continuous description(conformal field theory, string theory, or anything else). We claim that, at ``large'' distances (in units of the Planck length), the continuous spacetime geometry can be described as an infrared limit thereof. This hypothesis, indeed, is similar to the proposal by 't Hooft~\cite{Hooft:2016pmw}.

This assumption then entails developing and using the appropriate mathematical tools, that can describe the properties and dynamics of discrete and finite geometries as well as the emergence, in their infrared limits, of continuous geometries.

We do not wish to imply that it is not possible to define quantum gravity, with a finite-dimensional Hilbert space, in any other way; just that this is one possible way to describe quantum spacetime with a finite-dimensional Hilbert space.

We shall now present a short review of our recent work, along with the outline of our paper.

\subsection{Context and outline of the paper}
In our previous work we have proposed a discrete and finite model geometry, which we have called AdS$_2[\mathbb{Z}_N]$, for any, positive, integer $N$. This geometry is simply defined as the set of points of integer entries, $(k,l,m)$, that satisfy the relation $k^2+l^2-m^2\equiv 1\,\mathrm{mod}\,N$. Thus, we have replaced, in the definition of the continuous AdS$_2$ geometry, the real numbers with the finite ring of integers mod $N$.

AdS$_2[\mathbb{Z}_N]$, defined in this way, has a random structure, due to the modular arithmetic. It is well known that deterministic processes, using modular arithmetic, can produce deterministic random sequences of points~\cite{cesaratto2008non, knuth1981art}.

As explained in~\cite{Axenides:2013iwa, Axenides:2015aha,Axenides:2016nmf}, this particular discretization is chosen, among many possible ones, because it supports the holographic correspondence between the bulk, AdS$_2[\mathbb{Z}_N]$, and its boundary, $\mathbb{P}^1[\mathbb{Z}_N]$, the discrete projective line. The reason this discrete holography exists at all is that it is possible to realize the action of its discrete and finite isometry group, ${\rm PSL}_2[\mathbb{Z}_N]$, in two ways: Firstly, as the isometry group of the bulk and secondly, as the (Möbius) conformal group on the boundary.

In this approach a long-standing question has been the meaning and existence of a continuum limit of the finite and random modular geometry as $N\to\infty$; of equal importance is, of course whether the usual smooth AdS$_2$ geometry can be recovered in this way at all.

In the present paper we will demonstrate firstly that this limit exists and, secondly that, in fact, the continuous geometry of AdS$_2$ emerges from the discrete
AdS$_2^M[\mathbb{Z}_N]$ geometry as an infrared limit. In order to show this we reconstruct this geometry from AdS$_2$, in two steps:
\begin{enumerate}\itemsep=0pt
\item
The first step involves the discretization of AdS$_2$, using an appropriate spacetime lattice in the ambient $(2+1)$-dimensional, Minkowski, spacetime. This requires introducing an ultraviolet cutoff $a=R_{{\rm AdS}_2}/M$, for any integer~$M$.
The lattice spacing $a$ has the important property that it breaks the continuous Lorentz group to its arithmetic discrete subgroup ${\rm SO}(2,1,\mathbb{Z})$~\cite{Schild48, Schild49}. The Minkowski spacetime lattice induces moreover, on the continuum~AdS$_2$, an infinite set of integral points with isometry group ${\rm SO}(2,1,\mathbb{Z})$. This set defines an integral lattice of~AdS$_2$, for any~$M$, which we shall call henceforth AdS$_2^M[\mathbb{Z}]$.

The reason for introducing a sequence of lattices is that it allows us to define
the continuum limit by taking $a\to 0$ or $M\to\infty$, keeping $R_{{\rm AdS}_2}$ fixed.

In this limit the sequence of spaces AdS$_2^M[\mathbb{Z}]$ tends to AdS$_2$ in the topology of the ambient, flat, three-dimensional spacetime.
\item
The second step involves the introduction of an infrared cutoff, $L=a N$, where $N > M$, such that, in the limit $M\to\infty$ and $N\to\infty$, the ratio $N/M$ tends to a finite value, $L/R_{{\rm AdS}_2}\equiv\gamma > 1$. It should be noted that this construction remains consistent, also when $N< M$ and $\gamma < 1$. The difference is that in this case the throat doesn't lie within the enclosing box.

The introduction of the infrared cutoff is realized by symmetrically enclosing a region of the throat of the AdS$_2$ hyperboloid, as large as desired, in a box, of size $L$, in the ambient spacetime and then imposing periodic boundary conditions. In this way, we obtain the AdS$_2$ hyperboloid infinitely folded due to the periodic boundary conditions. It is possible to recover the unfolded AdS$_2$ complete geometry by removing the infrared cutoff in the limit $L\to\infty$.

 On the other hand, the introduction of the periodic box of size $L=Na$
identifies all the points of the integral lattice, whose coordinates differ by integer multiples of $N$.

This equivalence relation implies that all the points of AdS$_2^M[\mathbb{Z}]$ can be classified by a~finite number of equivalence classes represented by points
{\em inside} the box. However, these representatives need not lie on the part of AdS$_2^M[\mathbb{Z}]$ that's enclosed by this box. We observe in addition that the IR cutoff, $N$, deforms the ${\rm SO}(2,1,\mathbb{Z})$ symmetry of the integral lattice to its $\mathrm{mod}\,N$ reduction, ${\rm SO}(2,1,\mathbb{Z}_N)$.

The coordinates of all points $(k,l,m)$, of AdS$_2^M[\mathbb{Z}]$, that satisfy the equation
\begin{equation*}
k^2+l^2-m^2\equiv M^2\,\mathrm{mod}\,N
\end{equation*}
define the finite geometry $\mathrm{AdS}_2^M[\mathbb{Z}_N]$.

In order to identify the solutions of the above equation with the elements of AdS$_2[\mathbb{Z}_N]$, it is necessary to impose that
\begin{equation*}
M^2\equiv 1\,\mathrm{mod}\,N.
\end{equation*}
 This condition provides a relation between the UV and IR cutoffs, $M$ and $N$.
\end{enumerate}
Having reconstructed the finite geometry $\mathrm{AdS}_2^M[\mathbb{Z}_N]$, by the two-step process, discretization and toroidal compactification, we are able to show that the continuum limit can be taken by finding infinite sequences of UV/IR cutoff pairs $\{(M_n,N_n)\}$, under the constraint $M^2\equiv 1\,\mathrm{mod}\,N$,that satisfy the conditions described in the two-step process, mentioned above.

The main result of our paper is the explicit construction of the continuum limit, by a reverse, two-step, process:
\begin{enumerate}\itemsep=0pt
\item
First, we remove the UV cutoff, using pairs of UV/IR cutoffs, chosen from the $k$-Fibonacci sequences, which lead to different values of the ratio $\gamma$, for different values of $k$.
\item
Next, we remark that the ratio, $L/R_{{\rm AdS}_2}$ and, thus, the
IR cutoff, $L$, is an increasing function of $k$ and, therefore, in the large $k$ limit, we can remove the IR cutoff, while keeping the radius, $R_{{\rm AdS}_2}$ fixed, but arbitrary.
\end{enumerate}

The plan of our paper is as follows:

Section~\ref{NHEG} consists of two subsections: In Section~\ref{holo} we recall the salient features of the geometry of AdS$_2$ as a ruling surface and as a coset space.
In Section~\ref{modgeom}, we describe the coset structure and the ruling property of the finite geometry, AdS$_2[\mathbb{Z}_N]$ and we discuss the problem of counting its points, for~$N$ a power of a~prime integer. Using the Chinese remainder theorem, we find the number of points, for any integer~$N$. We find that the ruling property leads to a~consistent description with one chart, when $N=p^r$ and $p\,\mathrm{mod}\, 4\equiv 3$ and requires two charts, if $p\,\mathrm{mod}\,4\equiv 1$.

Section~\ref{modN} consists of two subsections: In Section~\ref{UVcutofflatt} we introduce a lattice in the embedding two time-one space Minkowski spacetime, $\mathscr{M}^{2,1}$, with lattice spacing $a=R_{{\rm AdS}_2}/M$. This ``ultraviolet (UV) cutoff'' induces the integral lattice of points of AdS$_2$, which we call AdS$_2^M[\mathbb{Z}]$.
We identify the isometry group of AdS$_2^M[\mathbb{Z}]$ as ${\rm SO}(2,1,\mathbb{Z})$ for any positive integer $M$ and present a review of its basic features. Moreover we show that all the points of AdS$_2^M[\mathbb{Z}]$ lie on light-like lines, which intersect the circle of the throat at rational points. Furthermore, on each light-like line, there is an infinite number of, randomly distributed, integral points.

In Section~\ref{UVIR} we compactify the embedding Minkowski spacetime, $\mathscr{M}^{2,1}$, inside a torus,~$\mathbb{T}^3$, of size $L=Na$, where $N$ is an integer, larger than~$M$, by imposing periodic boundary conditions. This is equivalent to identifying the points, whose coordinates differ by integral multiples of~$L$.
The continuum AdS$_2$, after such a compactification, becomes infinitely folded inside the torus. The infinite number of points of AdS$_2^M[\mathbb{Z}]$ gets mapped to a set of a finite number of points, which defines a finite geometry, AdS$_2^M[\mathbb{Z}_N]$. The isometry group of this geometry is found to be ${\rm SO}(2,1,\mathbb{Z}_N)$ for all $M$, which is the reduction mod $N$ of the group ${\rm SO}(2,1,\mathbb{Z})$.

In Section~\ref{contlim} we construct the continuum limit of AdS$_2[\mathbb{Z}_N]$, by relating it to the geometry AdS$_2^M[\mathbb{Z}_N]$. This is achieved by imposing the constraint $M^2\equiv 1\,\mathrm{mod}\,N$. In Section~\ref{fibon} we construct a sequence of UV/IR pairs, $(M_n,N_n)$, $n=1,2,3,\ldots$, that belong to the Fibonacci sequence $f_n$, with the properties mentioned previously.
The limit $n\to\infty$ corresponds to the continuum limit $a\to 0$, where the UV cutoff, with {\em fixed} IR cutoff $L=R_{{\rm AdS}_2}\gamma$, has been removed.
In Section~\ref{contfrac} we show how the IR cutoff can be removed, once we consider UV/IR pairs belonging to the $k$-Fibonacci sequences, by taking the limit $k\to\infty$.

In Section~\ref{concl} we draw our conclusions and present our ideas for further inquiry.

\section[\protect{Continuum AdS2 and the AdS2[ZN] modular geometry}]{Continuum $\boldsymbol{{\rm AdS}_2}$ and the $\boldsymbol{{\rm AdS}_2[\mathbb{Z}_N]}$ modular geometry}\label{NHEG}
\subsection[AdS2 geometry as a ruling surface and as a coset space]{$\boldsymbol{{\rm AdS}_2}$ geometry as a ruling surface and as a coset space}\label{holo}

In the near horizon region of spherically symmetric 4d extremal black holes the geometry is known to be of the form
AdS$_2\times S^2$, where the AdS$_2 =\mathrm{PSL}(2,\mathbb{R})/{\rm PSO}(1,1,\mathbb{R})$ factor describes the geometry of the radial and time coordinates
 and $S^2$ is the horizon surface.

 In the present work we will develop the necessary mathematical framework which will enable us to discretize consistently the ${\rm AdS}_2$ factor, leaving for a future publication the discretization of the $S^2$ factor.

Indeed, we shall review the salient features of the continuum AdS$_2$, geometry as a single-sheeted 2d hyperboloid, considered both as a ruled surface and as a coset space~\cite{BengtssonAdS,Gibbons:2011sg}. Both of these descriptions are amenable to consistent discretization as we shall see in the following sections.

The AdS$_2$ spacetime is a one-sheeted hyperboloid, defined through its
global embedding in Minkowski spacetime with one space~-- and two time-like
dimensions by the equation~\cite{Cadoni:1999ja,Patricot:2004kc}.{\samepage
\begin{equation*}
x_0^2 + x_1^2 - x_2^2 = R_{{\rm AdS}_2}^2.
\end{equation*}
We shall work in units where $R_{{\rm AdS}_2}=1$.}

The boundaries of AdS$_2$ consist of two time-like disconnected circles, where
AdS$_2$ approaches, asymptotically, the light cone of ${\mathscr M}^{2,1}$,
\begin{equation*}
x_0^2 + x_1^2 - x_2^2 = 0.
\end{equation*}

AdS$_2$ can be, also, described as the homogeneous space ${\rm SO}(2,1)/{\rm SO}(1,1)$. This case is special, in that ${\rm SO}(2,1)$ has a double cover, ${\rm SL}(2,\mathbb{R})$, so that we have $\mathrm{AdS}_2 = \mathrm{PSL}(2,\mathbb{R})/{\rm PSO}(1,1)$.

In order to establish our notation and conventions, we proceed with the Weyl construction of the
double covering group, ${\rm PSL}(2,\mathbb{R})$.

To every point, $x_\mu\in\mathrm{AdS}_2$, $\mu=0,1,2$, we assign the
traceless, real, $2\times 2$ matrix
\begin{equation*}
{\sf M}(x)\equiv\left(\begin{matrix} x_0 & x_1+x_2 \\x_1-x_2 & -x_0\end{matrix}\right).
\end{equation*}
Its determinant is $\det {\sf M}(x)=-x_0^2-x_1^2+x_2^2=-1$.

The action of any element ${\sf A}$ of the isometry group SL$(2,\mathbb{R})$ on AdS$_2$ is defined through the
mapping
\begin{equation*}
{\sf M}(x') = {\sf A}{\sf M}(x){\sf A}^{-1}.
\end{equation*}

This induces an ${\rm SO}(2,1)$ transformation on $(x_0,x_+,x_-)$, where $x_\pm=x_1\pm x_2$,
\begin{equation*}
x' \equiv \Lambda({\sf A}) x.
\end{equation*}
More concretely, when
\begin{equation*}
{\sf A} = \left(\begin{matrix} a & b \\ c & d\end{matrix}\right)
\end{equation*}
 the induced Lorentz transformation, $\Lambda({\sf A})$, in the light cone basis $(x_0,x_+,x_-)$, is given by the expression
\begin{equation*}
\Lambda({\sf A})=\left(\begin{matrix}ad+bc & -ac & bd \\
 -2ab & a^2 & -b^2\\
 2cd & -c^2 & d^2\end{matrix}\right).
\end{equation*}

Choosing as the origin of coordinates the base point $\bm{p}\equiv (1,0,0)$, its
stability group ${\rm SO}(1,1)$ is the group of Lorentz transformations in the
$x_0=0$ plane of ${\mathscr M}^{2,1}$ or, equivalently, the ``scaling''
subgroup, ${\sf D}$, of ${\rm SL}(2,\mathbb{R})$
\begin{equation*}
{\sf D}\ni {\sf S}(\lambda)\equiv \left(\begin{matrix} \lambda & 0 \\ 0 & \lambda^{-1}\end{matrix}\right)
\end{equation*}
for $\lambda\in\mathbb{R}^\ast$.

For this choice of the stability point, we define the coset, $h_{\sf A}$, by decomposing ${\sf A}$ as
\begin{equation*}
{\sf A} = h_{\sf A}{\sf S}(\lambda_{\sf A}).
\end{equation*}
Thus, we associate uniquely to every point $x\in\mathrm{AdS}_2$ the corresponding coset representati\-ve~$h_{\sf A}(x)$.

We herein introduce the global coordinate system, defined by the straight lines that generate AdS$_2$ and for which it can be checked easily that they form its complete set of light cones.

Consider the two lines, $\bm{l}_\pm(\bm{p})$, passing through the point $\bm{p}\in{\mathscr M}^{2,1}$,
orthogonal to the $x_0$ axis and at angles $\pm\pi/4$ to the $x_1=0$ plane. They are defined by the intersection of AdS$_2$ and the plane $x_0=1$, cf.\ Fig.~\ref{LCAdS2}.

The coordinates of any point, $\bm{q}_+\in\bm{l}_+(\bm{p})$,
 $\bm{q}_-\in\bm{l}_-(\bm{p})$ are given as $(1,\mu_\pm,\pm\mu_\pm)$,
 $\mu_\pm\in\mathbb{R}$ correspondingly.

 We can parametrize any point $x_\mu$, of AdS$_2$, by the intersection of the local light cone lines,~$\bm{l}_\pm(x)$, with coordinates $\mu_\pm$ and $\phi_\pm$ through the relations
\begin{gather*}
x_0 = \cos\phi_\pm -\mu_\pm\sin\phi_\pm,\qquad
x_1 = \sin\phi_\pm +\mu_\pm\cos\phi_\pm,\qquad
x_2 = \pm\mu_\pm. 
\end{gather*}
These can be inverted as follows
\begin{equation}
\label{inverse_mapping}
{\rm e}^{\mathrm{i}\phi_\pm} = \frac{x_0 + \mathrm{i}x_1}{1\pm\mathrm{i} x_2},\qquad
\mu_\pm = \pm x_2.
\end{equation}
The geometric meaning of the coordinates $\phi$ and $\mu$ is that $\mu$ parametrizes the $x_2$, space-like, coordinate and, thus, $\mu_\pm\sqrt{2}$ parametrizes the light cone lines $\bm{l}_\pm(x)$. The angle $\phi_\pm$ is the azimuthal angle of the intersection of $\bm{l}_\pm(x)$ with the plane $(x_0,x_1)$. From equation~(\ref{inverse_mapping}), by re-expressing numerator and denominator in polar coordinates, we find
\begin{equation}\label{polar_inverse_mapping}
\phi=\tau-\sigma,
\end{equation}
where $\tau$ and $\sigma$ are the arguments of the complex numbers $x_0+\mathrm{i}x_1$ and $1+\mathrm{i}x_2$.

The corresponding coset parametrization (group coset motion which brings the origin to the point $x$) is
\begin{equation*}
h(\mu_\pm,\phi_\pm) = {\sf R}(\phi_\pm){\sf T}_\pm(\mu_\pm),
\end{equation*}
where
\begin{equation*}
{\sf R}(\phi) = \left(\begin{matrix} \cos\phi/2&
 -\sin\phi/2\\\sin\phi/2& \cos\phi/2\end{matrix}\right)
\end{equation*}
and
\begin{equation*}
{\sf T}_+(\mu) = \left[{\sf T}_-(-\mu)\right]^\mathrm{T} =
\left(\begin{matrix} 1 & -\mu\\ 0 & 1\end{matrix}\right).
\end{equation*}
It is easy to see also, that ${\sf T}_\pm(\mu_\pm)$, acting on the base point
$X(\bm{p})$, generate the light cone~$l_\pm(\bm{p})$, so we identify these one parameter subgroups with the light cones at $\bm{p}$.

\begin{figure}[t]\centering
\includegraphics[scale=0.6]{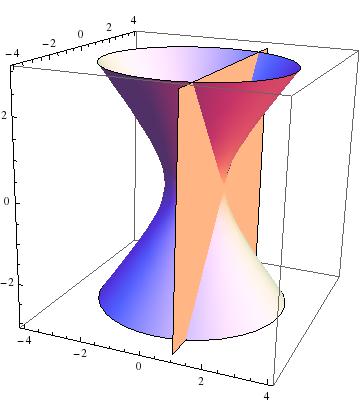}
\caption{The light cone of AdS$_2$ at $\bm{p}=(1,0,0)$.}\label{LCAdS2}
\end{figure}

In the literature the study of field theories on AdS$_2$ requires an extension, to the universal covering of this spacetime, $\widetilde{\mathrm{AdS}_2}$, together with appropriate boundary conditions, in order to avoid closed time-like geodesics and reflection of waves from the boundary. This extension can be parametrized using as time coordinate the azimuthal angle $\tau$, by extending its range, $(-\pi,\pi)$ to $(-\pi +2\pi k,\pi+ 2\pi (k+1))$, $k=\pm1,\pm2,\ldots$ and by the space coordinate $\sigma\in(-\pi/2,\pi/2)$, defined in equations~(\ref{polar_inverse_mapping}). The extension of the range of $\tau$ parametrizes the infinitely-sheeted Riemann surface of the function $\log(\cdot)$, used in deriving equation~(\ref{polar_inverse_mapping}).

It is interesting to note that the coset structure of AdS$_2$ can be elevated to $\widetilde{\mathrm{AdS}_2}$ by using the universal covering group of ${\rm SL}(2,\mathbb{R})$, $\widetilde{{\rm SL}(2,\mathbb{R})}$, which has been explicitly constructed in~\cite{rawnsley}.

\subsection[\protect{The discrete modular geometry AdS2[ZN] and its isometries}]{The discrete modular geometry $\boldsymbol{{\rm AdS}_2[\mathbb{Z}_N}]$ and its isometries}\label{modgeom}
We propose to model the random, non-local geometry of the near horizon region of extremal black holes by a number-theoretic discretization of the AdS$_2$ factor, that preserves its group-theoretical structure.

This is done by replacing the continuous coset structure of AdS$_2$, presented in the previous section, by the discrete cosets,
\begin{equation*}
\mathrm{AdS}_2[ \mathbb{Z}_N] =\mathrm{PSL}(2,\mathbb{Z}_N)/\mathrm{PSO}(1,1,\mathbb{Z}_N).
\end{equation*}
We thereby replace the set of real numbers, $\mathbb{R}$, by the set of integers modulo~$N$. We called this a ``modular discretization'' of AdS$_2$ in~\cite{Axenides:2013iwa}.

This is a finite, deterministically random, set of points in the embedding Minkowski spacetime~$\mathscr{M}^{2,1}$.

By introducing appropriate length scales and by taking the large $N$ limit we shall show in the following sections how the smooth geometry of AdS$_2$ can emerge.

We notice some interesting factorizations of the algebraic structures with respect to the integer $N$: If $N=N_1 N_2$, with $N_{1,2}$ coprime, then we have~\cite{Athanasiu:1998cq}
\begin{equation*}
\mathrm{PSL}(2,\mathbb{Z}_{N_1 N_2}) = \mathrm{PSL}(2,\mathbb{Z}_{N_1})\times
 \mathrm{PSL}(2,\mathbb{Z}_{N_2})
\end{equation*}
and
\begin{equation*}
\mathrm{AdS}_2[\mathbb{Z}_{N_1 N_2}] = \mathrm{AdS}_2[\mathbb{Z}_{N_1}] \times \mathrm{AdS}_2[\mathbb{Z}_{N_2}].
\end{equation*}
These factorizations imply that all powers of primes, $2^{n_1}, 3^{n_2},5^{n_3},\ldots$, are the building blocks of our construction. The physical interpretation of this factorization is that the most coarse-grained Hilbert spaces on the horizon have dimensions powers of primes.

In order to study the finite geometry of AdS$_2[p^r]$, we recall the following facts about its ``isometry group'' ${\rm PSL}(2,\mathbb{Z}_{p^r})$:
\begin{itemize}\itemsep=0pt
\item
The order of ${\rm PSL}(2,\mathbb{Z}_{p^r})$ is $p^{3r-2}\big(p^2-1\big)/2$~\cite{sp2norder}.
\item
The set of points of the finite geometry of AdS$_2[p^r]$ is, by definition, the set of all solutions of the equation
\begin{equation*}
x_0^2 + x_1^2 -x_2^2\equiv 1\,\mathrm{mod}\,p^r.
\end{equation*}
The elements of this set can be parametrized as follows
\begin{equation*}
x_0\equiv (a-b\mu)\,\mathrm{mod}\,p^r, \qquad x_1\equiv (b+a\mu)\,\mathrm{mod}\,p^r, \qquad
x_2\equiv \mu\,\mathrm{mod}\,p^r.
\end{equation*}
where $a^2 + b^2\equiv 1\,\mathrm{mod}\,p^r$ and $a, b, \mu\in\mathbb{Z}_{p^r}$.
\item The points of AdS$_2[p^r]$ comprise the bulk~-- for the holographic correspondence~-- to these we must include the points on the boundary.

The boundary is the ``mod $p^r$'' projective line, $\mathbb{P}^1[\mathbb{Z}_{p^r}]$, defined as the set
\begin{equation*}
\mathbb{P}^1[\mathbb{Z}_{p^r}] = \mathbb{Z}^\ast[p^r]\cup\{0,\infty\},
\end{equation*}
so the number of boundary points (cosets) is $p^{r-1}(p-1)+2$.
\end{itemize}

We shall focus henceforth on the properties of the random set of points, that constitute the bulk, i.e., AdS$_2[N=p^r]$.

 The deterministic randomness of the points of AdS$_2^M[\mathbb{Z}_N]$ can be illuminated from their representation in the three-dimensional ambient space-time, cf.\ Fig.~\ref{IntPointsMod}.
\begin{figure}[t!]\centering
\includegraphics[scale=0.15]{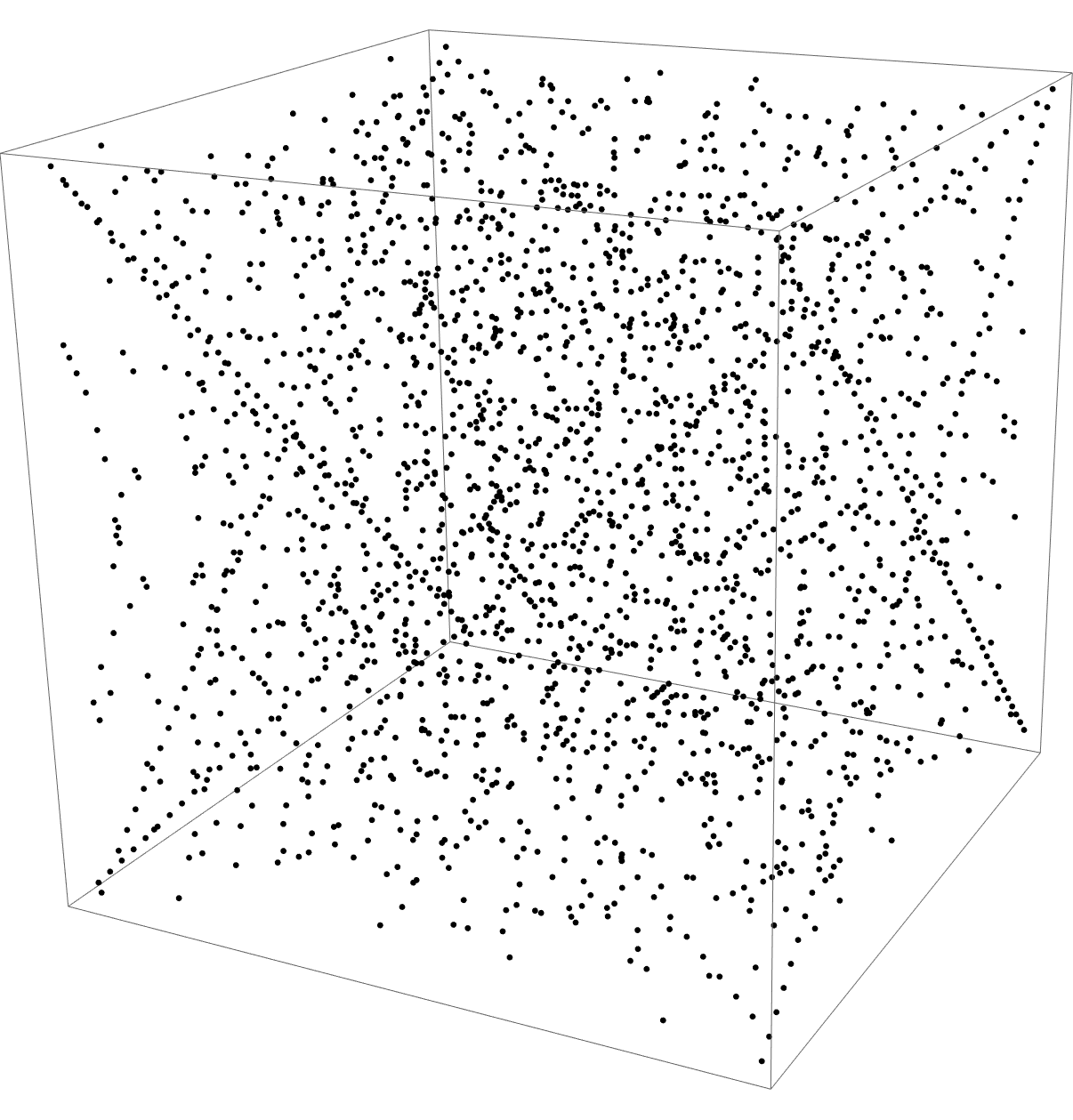}
\caption{The integral points, $(k,l,m)$, that satisfy $k^2+l^2-m^2\equiv 1\,\mathrm{mod}\,47$, i.e., that belong to AdS$_2[47]$.}
\label{IntPointsMod}
\end{figure}

\begin{prop}It is interesting to notice, that, in analogy with the continuous case, it is possible to define, for ${\rm AdS}_2[N]$, a global ruling parametrization for $N=p^r$, where $p$ is a prime of the form $(a)$~$p\equiv 3\,\mathrm{mod}\,4$, while when $(b)$~$p\equiv 1\,\mathrm{mod}\,4$, we need two charts to obtain all such points.
\end{prop}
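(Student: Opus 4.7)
My plan is to analyze the proposed ruling parametrization $\Phi\colon(a,b,\mu)\mapsto(a-b\mu,\,b+a\mu,\,\mu)$ subject to $a^2+b^2\equiv 1\,\mathrm{mod}\,p^r$ by inverting it explicitly and identifying the arithmetic obstruction that separates the two cases $p\equiv 3$ and $p\equiv 1\,\mathrm{mod}\,4$.

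First I would verify that $\Phi$ lands in $\mathrm{AdS}_2[p^r]$ via the direct identity $x_0^2+x_1^2-x_2^2=(a^2+b^2)(1+\mu^2)-\mu^2$, which collapses to $1$ under the constraint. Next, given any $(x_0,x_1,x_2)\in\mathrm{AdS}_2[p^r]$, I would attempt inversion by setting $\mu=x_2$ and solving the $2\times 2$ linear system
\[
\left(\begin{matrix} 1 & -\mu \\ \mu & 1 \end{matrix}\right)\left(\begin{matrix} a\\ b\end{matrix}\right)=\left(\begin{matrix} x_0\\ x_1\end{matrix}\right),
\]
whose determinant is $1+\mu^2$. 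Unique solvability over $\mathbb{Z}_{p^r}$ is therefore equivalent to $1+\mu^2$ being a unit, i.e., to $1+\mu^2\not\equiv 0\,\mathrm{mod}\,p$. Moreover, whenever the inversion succeeds, the identity $(a^2+b^2)(1+\mu^2)=x_0^2+x_1^2=1+\mu^2$ forces $a^2+b^2\equiv 1\,\mathrm{mod}\,p^r$ automatically, so the recovered pair is admissible.

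The dichotomy now follows from Euler's criterion: $-1$ is a square $\mathrm{mod}\,p$ precisely when $p\equiv 1\,\mathrm{mod}\,4$. In case~$(a)$, no $\mu$ annihilates $1+\mu^2$ modulo $p$, so $1+\mu^2$ is a unit modulo $p^r$ for every $\mu$; hence $\Phi$ is a bijection and provides the single global chart. In case~$(b)$, Hensel's lemma lifts a root $\iota\in\mathbb{Z}_p$ of $X^2+1$ to $\widetilde\iota\in\mathbb{Z}_{p^r}$, and $\Phi$ fails precisely on the ``exceptional fibers'' $\mu\equiv\pm\widetilde\iota\,\mathrm{mod}\,p$. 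I would then construct a second chart $\Phi'$ by conjugating $\Phi$ with an element $g\in\mathrm{SO}(2,1,\mathbb{Z}_{p^r})$ whose induced action shifts the singular $x_2$-locus away from $\pm\widetilde\iota$; the two rulings together should then exhaust $\mathrm{AdS}_2[p^r]$.

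The main obstacle is exactly this last step in case~$(b)$: exhibiting $\Phi'$ concretely and verifying that the union of the two charts covers \emph{all} missing points, in particular the ``origin-type'' exceptional points $(0,0,\pm\widetilde\iota)$, where both null rulings through any throat point degenerate simultaneously. Here one must exploit the Gaussian splitting $x_0^2+x_1^2=(x_0+\widetilde\iota x_1)(x_0-\widetilde\iota x_1)$, which is available \emph{only} for $p\equiv 1\,\mathrm{mod}\,4$ and is precisely the algebraic source of the obstruction; its two linear factors naturally index the two parametrizations needed on the exceptional fibers. A convenient end-of-proof check is a point count: analyzing $x_0^2+x_1^2\equiv 1+\mu^2\,\mathrm{mod}\,p^r$ case-by-case, using the Gaussian factorization on the singular fibers and the unit property on the regular ones, should yield the total $|\mathrm{AdS}_2[p^r]|=p^{2r-1}(p+1)$, matching the sum of the (suitably disjoint) images of the two charts and confirming no point is missed.
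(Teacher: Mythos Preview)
Your argument follows the paper's line closely: the same ruling map, the same obstruction (invertibility of $1+\mu^2$), and the same dichotomy via Euler's criterion. In case~(a) you are actually more explicit than the paper, which merely appeals to ``an explicit comparison with direct counting''; your linear-system inversion establishes the bijection cleanly.

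The only substantive divergence is in case~(b). The paper does not conjugate by a general isometry; its second chart is simply $\Phi$ with $x_0$ and $x_1$ interchanged, i.e., the companion ruling. This is more concrete than your $\Phi'$ and makes transparent that on a singular fiber $\mu=\widetilde\iota$ the first ruling lands on the line $x_1=\widetilde\iota\,x_0$ while the swapped ruling lands on $x_0=\widetilde\iota\,x_1$. Your caution about the ``origin-type'' points is, however, well placed: $(0,0,\widetilde\iota)$ lies in $\mathrm{AdS}_2[p^r]$ but on neither branch with $(a,b)$ on the unit circle, since $a-\widetilde\iota b=0$ forces $a^2+b^2=0\neq 1$; already for $p=5$, $r=1$ the union of the two rulings covers only $28$ of the $30$ points. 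The paper's brief argument is therefore best read as showing that a single ruling is global in case~(a) and cannot be in case~(b), rather than as a literal two-chart atlas. Your alternative---conjugating by an element of $\mathrm{SO}(2,1,\mathbb{Z}_{p^r})$ that genuinely moves the singular $x_2$-locus---is the cleaner route if you want an honest two-chart cover, and your proposed point-count check against $p^{2r-1}(p+1)$ is the right way to close it.
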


\begin{proof} We, start, by parametrizing the points of AdS$_2^M[\mathbb{Z}_N]$ by the ruling of the discrete line $\bm{l}=(1,\mu,\mu)$ around the discrete circle of the throat of AdS$_2^M[\mathbb{Z}_N]$:
\begin{equation*}
x_0=a-\mu b,\qquad x_1=b+\mu a,\qquad x_2=\mu,
\end{equation*}
where $a,b,\mu\in\mathbb{Z}_N$ and $a^2+b^2\equiv 1\,\mathrm{mod}\,N$, cf.\ Fig.~\ref{discretecirclemod1001}.

\begin{figure}[t!]\centering
\includegraphics[scale=0.3]{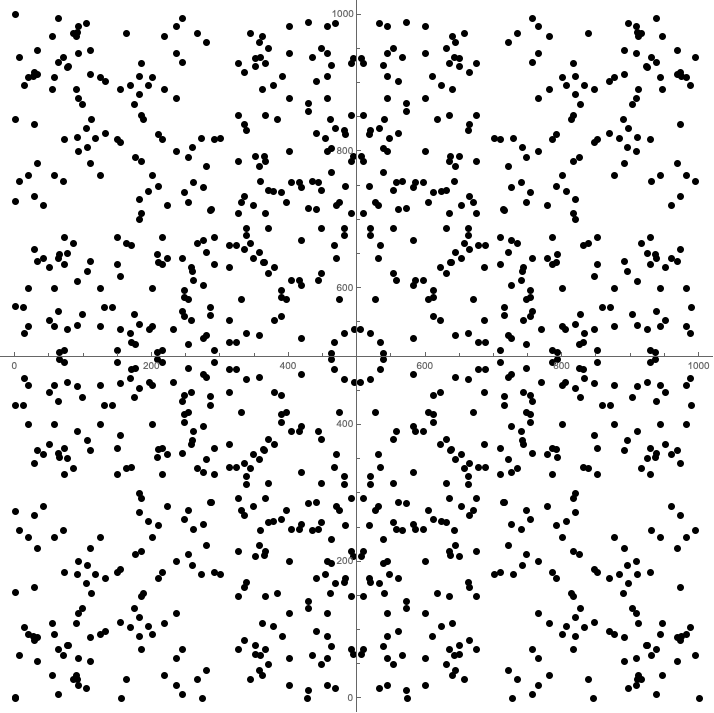}
\caption{The points of the discrete circle, $a^2+b^2\equiv 1\,\mathrm{mod}\,1001$.}\label{discretecirclemod1001}
\end{figure}

This parametrization suffices to generate {\em all} the points, for case~(a), as an explicit comparison with direct counting confirms; for case~(b), we must add a second parametrization, by exchanging~$x_0$ and~$x_1$. The reason this is necessary is that, in case~(b), given $x_0$ and $x_1$ it's not possible to obtain~$a$ and~$b$, since there exists a $\mu=\mu_0$, such that $\mu_0^2\equiv -1\,\mathrm{mod}\,N$ in this case.
\end{proof}

We shall now proceed in counting the integral points of AdS$_2[N]$, for any integer $N$.

\begin{prop} When $N=p^r$, numerical experiments suggest the following recursion relation for the number of points of ${\rm AdS}_2[\mathbb{Z} [p^r]$, ${\sf Sol}(p^r)$,
\begin{equation*}
{\sf Sol}(p^r)=p^{2(r-1)}{\sf Sol}(p)\Rightarrow {\sf Sol}(p^r)=p^{2r-1}(p+1),
\end{equation*}
where ${\sf Sol}(p)=p(p+1)$ and $r=1,2,\ldots$ for any prime integer~$p$.
\end{prop}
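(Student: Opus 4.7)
The strategy is two-fold: establish the base case ${\sf Sol}(p)=p(p+1)$ by a direct count (splitting by the quadratic character of $-1$), and then promote this to the recursion ${\sf Sol}(p^{r+1})=p^2\,{\sf Sol}(p^r)$ by a Hensel-lifting argument, exploiting the fact that the quadric $x_0^2+x_1^2-x_2^2\equiv 1$ is smooth modulo any odd prime.

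For the base case, I would fix $x_2\in\mathbb{Z}_p$ and count the number $N(c)$ of pairs $(x_0,x_1)\in\mathbb{Z}_p^2$ with $x_0^2+x_1^2\equiv c\,\mathrm{mod}\,p$ for $c=1+x_2^2$. A short character-sum (or factorisation of $x_0^2+x_1^2$ over $\mathbb{F}_p[i]$) computation yields
\begin{equation*}
N(0)=\begin{cases}1,&p\equiv 3\,\mathrm{mod}\,4,\\ 2p-1,&p\equiv 1\,\mathrm{mod}\,4,\end{cases}\qquad
N(c)=\begin{cases}p+1,&p\equiv 3\,\mathrm{mod}\,4,\\ p-1,&p\equiv 1\,\mathrm{mod}\,4,\end{cases}
\end{equation*}
for $c\not\equiv 0$. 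Summing over $x_2$ splits into the two sub-cases distinguished by whether $-1$ is a square: when $p\equiv 3\,\mathrm{mod}\,4$ the value $c=1+x_2^2$ is never zero and ${\sf Sol}(p)=p(p+1)$; when $p\equiv 1\,\mathrm{mod}\,4$ the two roots of $x_2^2\equiv -1$ each contribute $2p-1$ while the remaining $p-2$ values of $x_2$ contribute $p-1$, and the arithmetic again collapses to $p(p+1)$. This matches, pleasantly, the two-chart versus one-chart dichotomy of the preceding proposition.

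For the inductive step I apply Hensel's lemma to $f(x_0,x_1,x_2)=x_0^2+x_1^2-x_2^2-1$. Its gradient is $\nabla f=2(x_0,x_1,-x_2)$, which for odd $p$ vanishes modulo $p$ only at the origin; but $f(0,0,0)\equiv -1\not\equiv 0\,\mathrm{mod}\,p$, so no zero of $f$ is singular. Hence every solution $\mathbf{x}\in\mathbb{Z}_{p^r}^3$ of $f\equiv 0\,\mathrm{mod}\,p^r$ admits exactly $p^2$ lifts to $\mathbb{Z}_{p^{r+1}}^3$ (one factor of $p$ along each of the two non-pivot coordinates fixed by solving a linear equation in the pivot coordinate). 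This yields ${\sf Sol}(p^{r+1})=p^2\,{\sf Sol}(p^r)$, and iterating gives ${\sf Sol}(p^r)=p^{2(r-1)}\,{\sf Sol}(p)=p^{2r-1}(p+1)$, as claimed.

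The main obstacle, and the only part requiring care, is the smoothness hypothesis underpinning Hensel lifting: one really needs $p$ odd so that the factor of $2$ in $\nabla f$ is a unit. In characteristic $2$ the gradient argument collapses and the recursion fails at $p=2$ (a direct tally gives only $4$ points modulo $2$, not $6$), so the statement should be read as applying to odd primes~$p$, consistent with the hypothesis $p\not\equiv 0\,\mathrm{mod}\,4$ implicit in the preceding dichotomy. Beyond that, the argument is completely uniform in $r$, and combined with the Chinese remainder factorisation recalled earlier in this section it determines ${\sf Sol}(N)$ for arbitrary odd $N$.
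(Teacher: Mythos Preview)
Your argument is correct, and it proceeds along a genuinely different line from the paper's. The paper does not count points directly at all: it invokes the coset identification $\mathrm{AdS}_2[\mathbb{Z}_{p^r}]=\mathrm{PSL}_2[\mathbb{Z}_{p^r}]/\mathrm{PSO}(1,1,\mathbb{Z}_{p^r})$ and simply divides the known group orders, $\frac{p^{3r-2}(p^2-1)/2}{p^{r-1}(p-1)/2}=p^{2r-1}(p+1)$. Your route---a direct conic count over $\mathbb{F}_p$ for the base case, followed by Hensel lifting using smoothness of the quadric---is more self-contained: it does not presuppose the coset description or the group-order formulas, and it makes the recursion ${\sf Sol}(p^{r+1})=p^2\,{\sf Sol}(p^r)$ transparent as a statement about nonsingular lifts. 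The paper's approach, by contrast, is more structural and explains \emph{why} the answer factors as it does in terms of the homogeneous-space picture that governs the rest of the construction. Your caveat about $p=2$ is well taken and matches the paper exactly: immediately after the proposition the authors treat $N=2^n$ separately, finding ${\sf Sol}(2)=4$, ${\sf Sol}(4)=24$, and ${\sf Sol}(2^k)=2^{2k+1}$ for $k\geq 3$, precisely because the smoothness (equivalently, the coset-order formula) fails in characteristic~$2$.
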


This proposition can be proved, by using the coset property of AdS$_2[p^r]$.

\begin{proof}
 The rank of the group ${\rm PSL}_2[p^r]$ is known to be $p^{3 r-2} \big(p^2-1\big)/2$, while that of its dilatation subgroup ${\rm PSO}(1,1,p^r)$ is $p^{r-1} (p-1)/2$.
 This is a consequence of the fact that the rank is equal to the number of invertible integers modulo $p^r$ divided by 2 (due to its projective structure). Thus, since AdS$_2[\mathbb{Z}_{p^r}]$, is identified with the coset geometry ${\rm PSL}_2[p^r]/{\rm PSO}(1,1,p^r)$,
we get the promised result, $p^{2 r-1} (p+1)$.
\end{proof}

\begin{figure}[t]\centering
\includegraphics[scale=0.35]{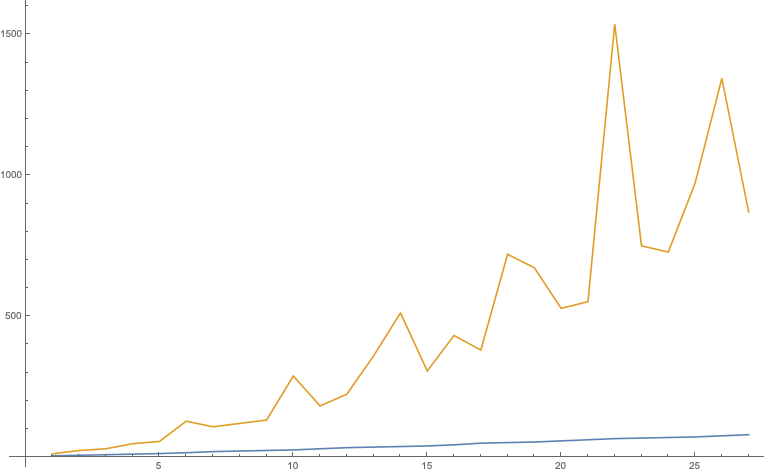}
\caption{The number of solutions to $k^2+l^2-m^2=1$ (blue curve) and $k^2+l^2-m^2\equiv 1\,\mathrm{mod}\,N$ (yellow curve), for $3\leq N\leq 29$ obtained by exact enumeration.}\label{PinakasShmeiwn29}
\end{figure}

The case $N=2^n$ is special: We find ${\sf Sol}(2)=4$, ${\sf Sol}(4)=24$, and ${\sf Sol}\big(2^k\big)=4{\sf Sol}(2^{k-1})$, for $k\geq 3$. We remark that $N=4$ is an exception.
The solution is ${\sf Sol}\big(2^k\big)=2^{2k+1}$, for $k\geq 3$. We plot the results of exact enumeration in Fig.~\ref{PinakasShmeiwn29} for $3\leq N\leq 29$.
We notice that there are peaks for composite values of~$N$. The additional points count the equivalence classes of points of AdS$_2[\mathbb{Z}]$ ${\rm mod}\,N$.

From these results we deduce that, for large $N$, the number of solutions, ${\rm mod}\,N$, scales like the area, i.e.,~$N^2$.

\section[Discretization and toroidal compactification of the AdS2 geometry]{Discretization and toroidal compactification\\ of the $\boldsymbol{{\rm AdS}_2}$ geometry}\label{modN}

\subsection[\protect{The UV cutoff, the lattice of integral points and the SO(2,1,Z) isometry of AdS2M[Z]}]{The UV cutoff, the lattice of integral points\\ and the $\boldsymbol{{\rm SO}(2,1,\mathbb{Z})}$ isometry of $\boldsymbol{{\rm AdS}_2^M[\mathbb{Z}]}$}\label{UVcutofflatt}

 We shall now present and study in detail the lattice of integral points of AdS$_2$, along with its isometries.

The physical lengthscale in our problem is the radius of the AdS$_2$ spacetime, $R_{{\rm AdS}_2}$. We set $R_{{\rm AdS}_2}=1$ and we divide it into $M$ segments, of length $a=R_{{\rm AdS}_2}/M$.
This defines $a$ as the UV cutoff (lattice spacing) and $M\in\mathbb{N}$ and, hence, a lattice in $\mathscr{M}^{2,1}$.

The continuum limit is defined by taking $M\to\infty$ and $a\to 0$ with $R_{{\rm AdS}_2}=1$ fixed.

The global embedding coordinates $(x_0, x_1,x_2)$ of this lattice are $(ka,la,ma)=a(k,l,m)$, where $k,l,m\in\mathbb{Z}$. They are measured in units of the lattice
spacing $a$. Therefore the lattice points, that lie on AdS$_2$ satisfy the equation
\begin{equation*}
k^2+l^2-m^2=M^2,
\end{equation*}
whose solutions define AdS$_2^M[\mathbb{Z}]$, the set of all integral points of AdS$_2$ with integer radius~$M$.

In the literature there has been considerable effort in counting the number of solutions to the above equation, in particular the asymptotics of the density of such points~\cite{Baragar,Duke,Kontorovich,lowryduda2017variants,oh2014limits, Shanks}.
This problem can be mapped to a problem whose solution is known, namely the Gauss circle problem. This pertains to finding the number $r_2(m,M)$ of solutions to the equation $k^2+l^2=M^2+m^2$. This number is determined by factoring $M^2+m^2$ into its prime factors~\cite{Shanks} and counting the number of primes, $p_i$, of the form $p_i\equiv1\,\mathrm{mod}\,4$~(this is described in detail in~\cite{bressoudwagon}; the dependence on~$M$ is a topic of current research~\cite{lowryduda2017variants,oh2014limits}).

This factorization procedure generates a sequence of primes that contains an element of inherent randomness. It is this property that captures the random distribution of the integral points on AdS$_2$~-- this is illustrated in Figs.~\ref{AdS2rat}.
\begin{figure}[t]\centering
\includegraphics[scale=0.4]{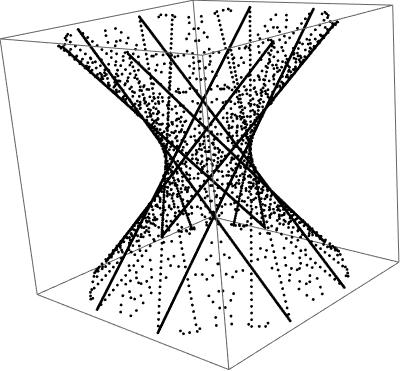}\qquad
\includegraphics[scale=0.4]{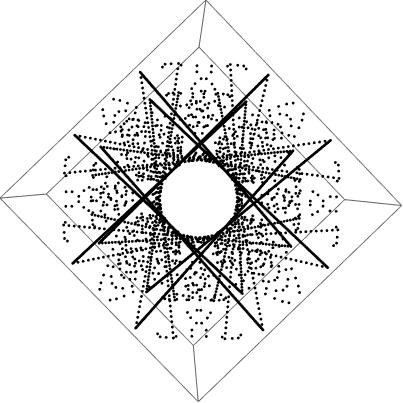}
\caption{Integral points on AdS$_2$.}\label{AdS2rat}
\end{figure}

Therefore, from these facts, the number of integral points of the hyperboloid, up to height~$m$, is given by the expression
\begin{equation*}
\mathrm{Sol}(m)=4+2\sum_{j=1}^m r_2(j,M).
\end{equation*}
We plot this function -- in Fig.~\ref{GPpoints}, for $M=1$, when $m$ runs from $-200$ to 200 (due to the symmetry, $m\leftrightarrow -m$, we plot only the positive values of $m$).

\begin{figure}[t]\centering
\includegraphics[scale=0.2]{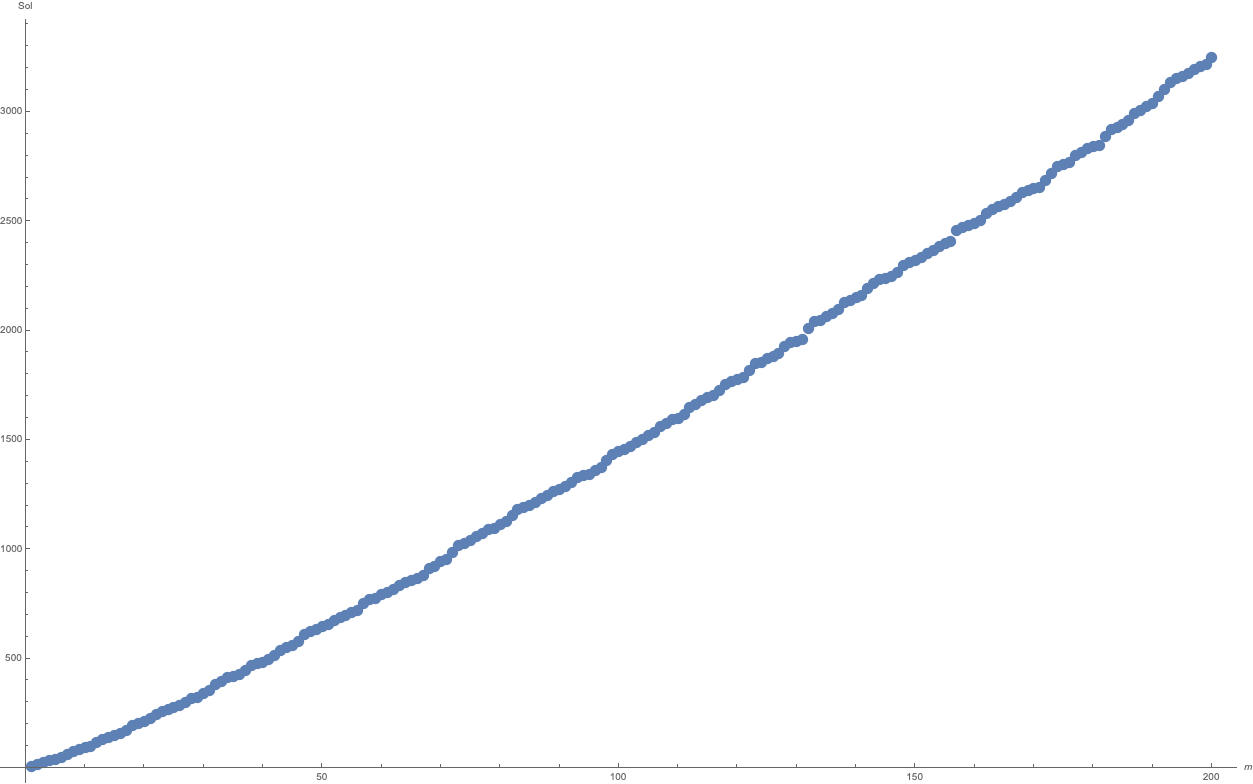}
\caption{The number of integral points, on AdS$_2$, as a function of the height, $m$, for $M=1$. Due to symmetry, $m\leftrightarrow -m$, we plot only the positive values of~$m$.}\label{GPpoints}
\end{figure}

It is, indeed, striking that the result is an almost straight line~\cite{lowryduda2017variants,oh2014limits}.

We shall now discuss how to actually construct these points, using the property that they belong to light-cone lines, which emerge from the rational points of the circle on the throat of AdS$_2$.

Using the ruling property of AdS$_2$,
\begin{gather*}
k = \cos\phi - \mu\sin\phi,\qquad
l = \sin\phi + \mu\cos\phi, \qquad
m = \mu,
\end{gather*}
we may repackage these as follows
\begin{equation*}
x_0 + \mathrm{i}x_1 = k+\mathrm{i}l={\rm e}^{\mathrm{i}\phi}(1+\mathrm{i}\mu)={\rm e}^{\mathrm{i}\phi}(1+\mathrm{i}m)\Leftrightarrow {\rm e}^{\mathrm{i}\phi}=\frac{k+\mathrm{i}l}{1+\mathrm{i}m},
\end{equation*}
hence
\begin{equation}\label{rationalpointscircle}
\cos\phi = \frac{k+lm}{1+m^2} \qquad \mathrm{and}\qquad \sin\phi = \frac{l-mk}{1+m^2}.
\end{equation}
We remark that these are rational numbers~-- therefore they label rational points on the circle~\cite{tan}.

The light cone lines at $(k,l,m)$ are, therefore, parametrized by $\mu\in(-\infty,\infty)$, as
\begin{equation*}
x_0 = \frac{k+lm}{1+m^2} -\mu\frac{l-mk}{1+m^2},\qquad
x_1 = \frac{l-mk}{1+m^2} +\mu\frac{k+lm}{1+m^2},\qquad
x_2 = \mu.
\end{equation*}
(When $\mu=x_2=m$, $x_0=k$ and $x_1 = l$.)
\begin{prop}On these specific light-cone lines there exist infinitely many integral points,
 when $\mu$, that labels the space-like direction~$x_2$, takes appropriate integer values.
\end{prop}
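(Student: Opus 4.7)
The plan is to exhibit an explicit infinite family of integer values of $\mu$ that produce integral $(x_0, x_1, x_2)$, using the given parametrization of the light-cone line through the base point $(k,l,m)\in\mathrm{AdS}_2^M[\mathbb{Z}]$. First I would record the sanity check: substituting $\mu=m$ produces the cancellation $k(1+m^2)/(1+m^2)=k$ (and similarly for $x_1$), recovering the starting integral point.

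The key observation is that, after clearing denominators, the direction vector of the light-cone line is $(mk-l,\ k+lm,\ 1+m^2)$, whose entries are all integers. Stepping from $(k,l,m)$ by any integer multiple of this vector must therefore land on another integral point of the line. Concretely, setting $\mu = m + t\bigl(1+m^2\bigr)$ for $t\in\mathbb{Z}$ and substituting into the given expressions, the factors of $1+m^2$ in the denominators cancel, leaving
\begin{equation*}
x_0 = k(1+tm) - tl,\qquad x_1 = l(1+tm) + tk,\qquad x_2 = m + t\bigl(1+m^2\bigr),
\end{equation*}
which are manifestly integers. Distinct values of $t$ give distinct values of $x_2$, hence distinct integral points, establishing the claim.

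No serious obstacle is expected; the proof is a short direct verification relying on the identity $(k+lm)\,m - (l-mk) = k\bigl(1+m^2\bigr)$ (and its companion for $x_1$). The only subtlety worth flagging is that the family $\mu = m + t\bigl(1+m^2\bigr)$ need not exhaust all integer $\mu$ producing integral coordinates; it merely suffices for the stated ``infinitely many'' conclusion. If one wishes to characterize \emph{all} admissible $\mu$, the cleanest route is to recast the integrality requirement in the Gaussian integers: using the factorizations $(k+lm)+\mathrm{i}(l-mk) = (k+\mathrm{i}l)(1-\mathrm{i}m)$ and $1+m^2=(1+\mathrm{i}m)(1-\mathrm{i}m)$, the conditions reduce to $(k+\mathrm{i}l)(1+\mathrm{i}\mu)\equiv 0 \pmod{1+\mathrm{i}m}$ in $\mathbb{Z}[\mathrm{i}]$, and via the isomorphism $\mathbb{Z}[\mathrm{i}]/(1+\mathrm{i}m)\cong\mathbb{Z}/(1+m^2)$ sending $\mathrm{i}\mapsto m$ this becomes the linear congruence $(k+lm)(1+m\mu)\equiv 0\pmod{1+m^2}$. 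Since $\gcd(m,1+m^2)=1$, this is always solvable and its complete solution set is an arithmetic progression in $\mu$ of common difference $(1+m^2)/\gcd(k+lm,1+m^2)$, confirming a posteriori the infinitude of integral points on every light-cone line.
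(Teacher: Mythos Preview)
Your proof is correct and lands on exactly the same one-parameter family as the paper: setting $\mu=m+t(1+m^2)$ yields $x_0=k+t(km-l)$, $x_1=l+t(k+lm)$, $x_2=m+t(1+m^2)$, which is the paper's result with $t$ in place of their integer parameter~$b$. The only cosmetic difference is that the paper reaches this family by requiring $(1+\mathrm{i}n)/(1+\mathrm{i}m)$ to be a Gaussian integer (the viewpoint you relegate to your final paragraph), whereas your primary argument extracts the integer direction vector $(mk-l,\,k+lm,\,1+m^2)$ directly; both routes are equivalent and equally short.
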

\begin{proof}We write
\begin{equation*}
x_0(\mu)+\mathrm{i}x_1(\mu)={\rm e}^{\mathrm{i}\phi}(1+\mathrm{i}\mu),
\end{equation*}
where $\phi$ is defined by equation~(\ref{rationalpointscircle}).

We look for integer values of $\mu=n\in\mathbb{Z}$, such that $x_0(n)$ and $x_1(n)$ are, also, integers.
That is{\samepage
\begin{equation*}
x_0(n)+\mathrm{i}x_1(n)=\frac{k+\mathrm{i}l}{1+\mathrm{i}m}(1+\mathrm{i}n)
\end{equation*}
should be a Gaussian integer and this can happen iff $(1+\mathrm{i}n)/(1+\mathrm{i}m)=a+\mathrm{i}b$ with $a,b\in\mathbb{Z}$.}

Therefore
\begin{equation*}
1+\mathrm{i}n=(a-mb)+\mathrm{i}(am+b)\Leftrightarrow \begin{cases} 1 = a-mb, \\ n = am+b.\end{cases}
\end{equation*}

Thus on the light cone line passing through the point $(k,l,m)$ there are infinite integer points parametrized as
\begin{gather*}
x_0 = k + b(km-l), \qquad
x_1 = l + b(k+lm),\qquad
x_2 = n = m+ b\big(1+m^2\big).\tag*{\qed}
\end{gather*}\renewcommand{\qed}{}
\end{proof}

\begin{prop}Conversely, on any light cone line emanating from any rational point of the circle on the throat of the hyperboloid there is an infinite number of integer points.
\end{prop}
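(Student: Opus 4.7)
The plan is to invert the argument of the previous proposition. First I would fix a rational throat point, which lies on the unit circle $x_0^2 + x_1^2 = 1$, $x_2 = 0$, and represent it as $(p/r, q/r, 0)$ with $p, q, r \in \mathbb{Z}$, $\gcd(p, q, r) = 1$ and $p^2 + q^2 = r^2$. A standard short argument based on the relation $p^2 + q^2 = r^2$ shows that any common divisor of two of $\{p, q, r\}$ must divide the third, so that the normalization $\gcd(p, q, r) = 1$ forces $(p, q, r)$ to be pairwise coprime; in particular $\gcd(q, r) = 1$ (and $\gcd(p, r) = 1$).

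Next, using the ruling parametrization recalled in equation~(\ref{rationalpointscircle}), the light-cone line emanating from this rational throat point is
\[
x_0(\mu) = \frac{p - \mu q}{r}, \qquad x_1(\mu) = \frac{q + \mu p}{r}, \qquad x_2(\mu) = \mu, \qquad \mu \in \mathbb{R}.
\]
Finding an integer point on this line amounts to finding $\mu = n \in \mathbb{Z}$ satisfying simultaneously $p \equiv n q \pmod{r}$ and $q + n p \equiv 0 \pmod{r}$. Since $q$ is invertible modulo $r$, the first congruence has a unique solution $n \equiv p q^{-1} \pmod{r}$. I would then verify that the second congruence is then automatic, via
\[
q + n p \equiv q + p^2 q^{-1} \equiv q^{-1}\bigl(p^2 + q^2\bigr) \equiv q^{-1} r^2 \equiv 0 \pmod{r},
\]
where the Pythagorean relation $p^2 + q^2 = r^2$ has been used. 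Hence the integers $n = n_0 + k r$, $k \in \mathbb{Z}$, with $n_0$ any fixed representative of $p q^{-1}$ mod $r$, all produce integer points on the light-cone line, establishing infiniteness.

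I do not foresee a deep obstacle. The only care needed is for the degenerate throat points $(\pm 1, 0, 0)$ and $(0, \pm 1, 0)$, where one of $p, q$ vanishes and the light-cone line becomes manifestly integral for every $\mu \in \mathbb{Z}$, bypassing the invertibility step. The content of the proof is essentially a Bezout-type inversion of the forward direction established in the previous proposition, with the decisive input being the coprimality of the legs and hypotenuse of a primitive Pythagorean triple, which is exactly what makes the congruence $n q \equiv p \pmod{r}$ solvable in the first place.
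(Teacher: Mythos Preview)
Your argument is correct. The coprimality of a primitive Pythagorean triple is exactly what is needed, and your verification that the second congruence follows automatically from the first via $p^2+q^2=r^2$ is clean. The degenerate cases you flag are precisely those with $r=1$, where everything is trivially integral, so nothing is missing.

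The paper's proof reaches the same conclusion by a different parametrization. Instead of writing the rational throat point as $(p/r,q/r)$ with $p^2+q^2=r^2$, it uses the half-angle (Gaussian-integer) form $e^{\mathrm{i}\phi}=(a+\mathrm{i}b)/(a-\mathrm{i}b)$ with $\gcd(a,b)=1$, so that the integrality condition becomes the Bezout equation $ad-bc=1$; the one-parameter family $(c+\kappa a,\,d+\kappa b)$ then produces the infinite set of integer points, with explicit coordinates $x_0=ad+bc+2\kappa ab$, $x_1=ac-bd+\kappa(a^2-b^2)$, $x_2=ac+bd+\kappa(a^2+b^2)$. Your approach and the paper's are the same idea in two languages: your invertibility of $q$ mod~$r$ is Bezout in disguise, and your arithmetic progression $n=n_0+kr$ has the same step $r=a^2+b^2$ as the paper's $\kappa$-family. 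What the paper's route buys is an explicit closed form for the integer points and a direct identification of the step vector $(2ab,\,a^2-b^2,\,a^2+b^2)$ as light-like; what your route buys is that it avoids the tan-half-angle parametrization altogether and works directly with the given rational point, which is arguably more elementary.
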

\begin{proof}Indeed, we have
\begin{equation*}
{\rm e}^{\mathrm{i}\phi}\equiv\frac{a+\mathrm{i}b}{a-\mathrm{i}b}\Leftrightarrow x_0+\mathrm{i}x_1=\frac{a+\mathrm{i}b}{a-\mathrm{i}b}(1+\mathrm{i}n)
\end{equation*}
with $a,b\in\mathbb{Z}$. In order to obtain an integral point, for $\mu=n$, we must have
\begin{equation*}
\frac{1+\mathrm{i}n}{a-\mathrm{i}b}=d+\mathrm{i}c
\end{equation*}
with $c,d\in\mathbb{Z}$.

We immediately deduce that
\begin{equation*}
1 = ad-bc, \qquad n = ac + bd.
\end{equation*}
These expressions imply that, given the integers $a$ and $b$, it's possible to find the integers~$c$ and~$d$ and to express the coordinates $x_0$, $x_1$ and $x_2$ as
\begin{equation*}
x_0 = ad+bc,\qquad x_1 = ac-bd, \qquad x_2 = ac+bd.
\end{equation*}
The Diophantine equation $1 = ad-bc$ is solved for $c$ and $d$, given two coprime integers $a$ and $b$, by the Euclidian algorithm~-- which seems to lead to a unique solution, implying that the point $(x_0, x_1, x_2)$ is unique.

However there's a subtlety! There are {\em infinitely many} solutions $(c,d)$, to the equation $ad-bc=1$! The reason is that, given any one solution $(c,d)$, the pair $(c+\kappa a, d+\kappa b)$, with $\kappa\in\mathbb{Z}$, is, also, a solution, as it can be checked by substitution.

Therefore there is a one-parameter family of points, labeled by the integer $\kappa$:
\begin{gather}\label{integerpointAdS2}
x_0 = ad+bc + 2\kappa a b,\qquad\!
x_1 = ac - bd + \kappa \big(a^2-b^2\big),\qquad\!
x_2 = ac+bd + \kappa \big(a^2+b^2\big).\!
\end{gather}
We remark, however, that the vector $\big(2ab,a^2-b^2,a^2+b^2\big)$ is light-like, with respect to the $(++-)$ metric: $(2ab)^2+\big(a^2-b^2\big)^2-\big(a^2+b^2\big)^2=0$. So equation~(\ref{integerpointAdS2})
describes a shift of the point $(ad+bc, ac-bd, ac+bd)$, along a light-like direction. Since the shift is linear in the ``affine parameter'', $\kappa$, it generates a light-like line, passing through the original point.

In this way we have established the dictionary between the rational points of the circle and the integral points of the hyperboloid.
\end{proof}

Now we proceed with the study of the discrete symmetries of the integral Lorentzian lattice of~$\mathscr{M}^{2,1}$, where the lattice of integral points on AdS$_2$ is embedded.
The lattice of integral points of~$\mathscr{M}^{2,1}$, with one space-like and two time-like dimensions, carries as isometry group the group of integral Lorentz boosts ${\rm SO}(2,1,\mathbb{Z})$, as well as integral Poincar\'e translations. The double cover of this infinite and discrete group is ${\rm SL}(2,\mathbb{Z})$, the modular group. This has been shown by Schild~\cite{Schild48, Schild49} in the 1940s. The group ${\rm SO}(2,1,\mathbb{Z})$ can be generated by reflections, as has been shown by Coxeter~\cite{Coxeter}, Vinberg~\cite{Vinberg_1967}. This work culminates in the famous book by Kac~\cite{Kac1990}, where he introduced the notion of hyperbolic, infinite-dimensional, Lie algebras. The characteristic property of such algebras is that the discrete Weyl group of their root space is an integral Lorentz group. Generalization from ${\rm SL}(2,\mathbb{Z})$ to other normed algebras has been studied in~\cite{Feingold:2008ih}.

The fundamental domain of ${\rm SO}(2,1,\mathbb{Z})$ is the minimum
 set of points of the integral lattice of~$\mathscr{M}^{2,1}$, which are
not related by any element of the group and from which,
 all the other points of the lattice can be generated by repeated action
of the elements of the group. It turns out that the fundamental region
is an infinite set of points which can be generated by repeated action of reflections in the following way:

Using the metric $h\equiv\mathrm{diag}(1,1,-1)$ on $\mathscr{M}^{2,1}$ the generating reflections, elements of ${\rm SO}(2,1,\mathbb{Z})$, are given by the matrices
\begin{gather*}
R_1=\left(\begin{matrix}
-1 & & \\
 & 1 & \\
 & & 1
\end{matrix}\right), \qquad
R_2 = \left(\begin{matrix}
1 & & \\
 & 1 & \\
 & & -1
\end{matrix}\right), \qquad
R_3 = \left(\begin{matrix}
0 & 1 & \\
1 & 0 & \\
 & & 1
\end{matrix}\right),\\
R_4 = \left(\begin{matrix}
1 & -2 & -2\\
2 & -1 & -2\\
-2 & 2 & 3
\end{matrix}\right).
\end{gather*}
If $(k,l,m)$ are the coordinates of the integral lattice, the fundamental domain of ${\rm SO}(2,1,\mathbb{Z})$ can be defined by the conditions $m\geq k+l\geq 0$ and $k\geq l\geq 0$. This fundamental domain, restricted on AdS$_2^M[\mathbb{Z}]$, defines the corresponding fundamental domain of ${\rm SO}(2,1,\mathbb{Z})$, acting on AdS$_2^M[\mathbb{Z}]$.
This region of AdS$_2[\mathbb{Z}]$ lies in the positive octant of $\mathscr{M}^{2,1}$ and between the two planes, that define the conditions~-- cf.~Fig.~\ref{FundDomainAdS2ZN}. It is of {\em infinite} extent.

\begin{figure}[t!]\centering
\includegraphics[scale=0.5]{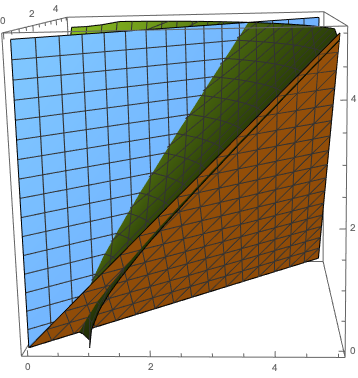}

\caption{The fundamental domain of ${\rm SO}(2,1,\mathbb{Z})$ on AdS$_2^M[\mathbb{Z}]$ is the dark green part of the hyperoboloid, in the positive octant, that lies between the two planes, $m\geq k+l\geq 0$ and $k\geq l\geq 0$.}\label{FundDomainAdS2ZN}
\end{figure}

\subsection[The IR cutoff and the toroidal compactification of AdS2]{The IR cutoff and the toroidal compactification of $\boldsymbol{{\rm AdS}_2}$}\label{UVIR}

Having introduced the lattice of integral points on AdS$_2$, which we consider as defining an UV cutoff, we proceed, now, to impose an infrared (IR) cutoff. The crucial reason for such a cutoff is that in order to study chaotic Hamiltonian dynamics on this spacetime~\cite{Axenides:2016nmf}, we have made use of the interpretation of AdS$_2$ as a phase space of single particles, due to the symplectic nature of the isometry ${\rm SL}(2,\mathbb{R})={\rm Sp}(2,\mathbb{R})$. The additional requirement of mixing (scrambling) imposes the condition of the compactness of the phase space and therefore the necessity of imposing of an infrared cutoff (for a detailed discussion of this point, cf.~\cite{ArnoldAvez}).

Having embedded the AdS$_2$ hyperboloid,
\begin{equation*}
x_0^2 + x_1^2 - x_2^2 = R_{{\rm AdS}_2}^2
\end{equation*}
in $\mathscr{M}^{2,1}$, the IR cutoff, $L$ is defined by periodically identifying all the spacetime points of $\mathscr{M}^{2,1}$, if the difference of their coordinates is an integral vector$\times L$:
\begin{equation*}
x\sim y\Leftrightarrow x-y = (k,l,m)L,
\end{equation*}
where $k,l,m\in\mathbb{Z}$. In this way we have compactified $\mathscr{M}^{2,1}$ to the three-dimensional torus, of size~$L$, $\mathbb{T}^3(L)$.

More concretely, $\mathbb{T}^3(L)$ is the fundamental domain of the group of integral translations, $\mathbb{Z}\times\mathbb{Z}\times\mathbb{Z}$, acting on $\mathscr{M}^{2,1}$. To describe this geometric property by the algebraic operation, mod~$L$, that acts on the coordinates of $\mathscr{M}^{2,1}$, we are led to identify the fundamental domain with the positive octant of $\mathscr{M}^{2,1}$, i.e., $x_0,x_1,x_2\geq 0$

After this compactification, the spacetime geometry of AdS$_2$ becomes a foliation of the 3-torus, with leaves the images of AdS$_2$ under the operation mod~$L$. So the equation, whose solutions define the points of the compactified AdS$_2$, is
\begin{equation}
\label{AdS2comp}
x_0^2+x_1^2-x_2^2\equiv R_{{\rm AdS}_2}^2\,\mathrm{mod}\,L,
\end{equation}
where $(x_0,x_1,x_2)\in\mathbb{T}^3(L)$.

It is obvious, that inside the 3-torus, there is a part of the AdS$_2$ surface, which corresponds to solutions of equation~(\ref{AdS2comp}), without the mod $L$ operation. On the other hand, the infinite part of AdS$_2$, that lies outside the torus, is partitioned in infinitely many pieces, which belong to images of $\mathbb{T}^3(L)$ in $\mathscr{M}^{2,1}$.
 These pieces are brought inside the torus by the mod $L$ operation.

Now we choose the IR cutoff $L$ in units of $a$, so that $L=aN$, where $N$ is an integer, independent of $M$. It is constrained by $N>M$, since the cube should contain, at least, the throat of~AdS$_2$.

So the scaling limit entails taking $M\to\infty$, $N\to\infty$, but keeping $L$ fixed.

The periodic nature of the IR cutoff implies that we must take the images of all integral points of AdS$_2[\mathbb{Z}]$ under the mod $N$ operation, inside the cubic lattice of $N^3$ points.

The set of these images satisfy the equations
\begin{equation}\label{AdS2ZN}
k^2+l^2-m^2\equiv M^2\,\mathrm{mod}\,N.
\end{equation}
The set of points satisfying this condition will be called AdS$_2^M[\mathbb{Z}_N]$.

Our definition for AdS$_2[\mathbb{Z}_N]$ in our previous work was similar to the one given here. The only difference being that the r.h.s.\ of equation~(\ref{AdS2ZN}) was $1\,\mathrm{mod}\,N$, which was chosen for convenience, rather than for any intrinsic reason. We remark that the two definitions are consistent iff $M^2\equiv 1\,\mathrm{mod}\,N$.

The solutions of equation~(\ref{AdS2ZN}), when $M^2\equiv 1\,\mathrm{mod}\,N$, produce the AdS$_2[\mathbb{Z}_N]$ geometry introduced in our previous work.

\section[Continuum limit for large N]{Continuum limit for large $\boldsymbol{N}$}\label{contlim}
\subsection{Constraints on the double sequences of the UV/IR cutoffs}\label{M2N}

Having constructed the finite geometry, AdS$_2^M[\mathbb{Z}_N]$ and established its relation with AdS$_2[\mathbb{Z}_N]$, we shall discuss the meaning of the limit, $M,N\to\infty$. It is in this limit that we hope to recover the continuum AdS$_2$ geometry.

Such a limit can be defined using the topology of the ambient Minkowski spacetime $\mathscr{M}^{2,1}$.

Specifically, we use a reverse, two-step, process: Firstly, by removing the UV cutoff; next, by removing the IR cutoff. This is realized by choosing any sequence of pairs of integers, $(M_n,N_n)$, $n=1,2,3,\ldots$, such that, for any $n=1,2,3,\ldots$:
\begin{itemize}\itemsep=0pt
\item $N_n>M_n$,
\item $M_n^2\equiv 1\,\mathrm{mod}\,N_n$,
\item the limit of the ratio $N_n/M_n$ takes a finite value, $>1$ (as $n\to\infty$), which we can identify with $L/R_{{\rm AdS}_2}$.
\end{itemize}
Below we shall present the general solution to the equation $M^2\equiv 1\,\mathrm{mod}\,N$. Subsequently, we shall select those solutions that satisfy the other requirements.

The first step is to factor $N$ into (powers of) primes, $N=N_1\times N_2\times\cdots\times N_l=q_1^{k_1}q_2^{k_2}\cdots q_l^{k_l}$. Then the equation $M^2\equiv 1\,\mathrm{mod}\,N$, is equivalent to the system
\begin{equation}\label{M21N}
M_I^2\equiv 1\,\mathrm{mod}\,q_I^{k_I},
\end{equation}
where $I=1,2,\ldots,l$. The Chinese remainder theorem~\cite{bressoudwagon} then implies that all the solutions of equation~(\ref{M21N}) can be used to construct $M$, with $M=M_1m_1n_1 + \cdots + M_lm_ln_l$, where
$M_I\equiv M\,\mathrm{mod}\,N_I$, $m_I=N/N_I$, $n_I \equiv m_I^{-1}\,\mathrm{mod}\,N_I$.

When $q_I\neq 2$, the solutions are $M_I=1$ and $q_I^{n_I}-1$. When $q_I = 2$, there exist four solutions, $M_I=1,2^{n_I}-1,2^{n_I-1}\pm 1$.

Now we must choose sequences, $N_n$ and determine the corresponding $M_n$, satisfying the constraints listed above.

In the next two subsections we shall present nontrivial examples of sequences of pairs, $(M_n, N_n)$ satisfying the above constraints, whose limiting ratio, $\lim\limits_{n\to\infty} N_n/M_n$, is the ``golden'' or ``silver'' ratios.
The general question of determining sequences which have an arbitrary, but given, limiting ratio, is an interesting question, which is deferred to a future work.

\subsection{Removing the UV cutoff by the Fibonacci sequence}\label{fibon}
Although it is easy to demonstrate the existence of such sequences~-- for example, $N_n=2^n$ and $M_n=2^{n-1}\pm 1$, where $M_n^2\equiv 1\,\mathrm{mod}\,N_n$ and $N_n/M_n\to 2$, which implies that \mbox{$L/R_{{\rm AdS}_2}=2$}, in this section we focus on another particular class of sequences, based on the Fibonacci integers,~$f_n$~\cite{bressoudwagon}. This case is of particular interest, since, in our previous paper~\cite{Axenides:2016nmf}, where we studied fast scrambling, we found that, for geodesic observers, moving in AdS$_2[N]$, with evolution operator the Arnol'd cat map, the fast scrambling bound is saturated, when~$N$ is a Fibonacci integer.

The Fibonacci sequence, defined by
\begin{equation*}
f_0=0, \qquad f_1=1, \qquad f_{n+1}=f_n+f_{n-1} ,
\end{equation*}
can be written in matrix form
\begin{equation*}
\left(\begin{matrix} f_n\\f_{n+1}\end{matrix}\right)=\underbrace{\left(\begin{matrix} 0 & 1 \\ 1 & 1\end{matrix}\right)}_{\sf A}\left(\begin{matrix} f_{n-1} \\ f_n\end{matrix}\right).
\end{equation*}
We remark that the famous Arnol'd cat map can be written as
\begin{equation*}
\left(\begin{matrix} 1 & 1 \\ 1 & 2\end{matrix}\right) = {\sf A}^2.
\end{equation*}
Since the matrix ${\sf A}$ doesn't depend on $n$, we can solve the recursion relation in closed form, by setting $f_n\equiv C \rho^n$ and find the equation, satisfied by $\rho$
\[
\rho^{n+1}=\rho^n+\rho^{n-1}\Leftrightarrow \rho^2-\rho-1=0\Leftrightarrow \rho\equiv \rho_\pm=\frac{1\pm\sqrt{5}}{2}.
\]
Therefore, we may express $f_n$ as a linear combination of $\rho_+^n$ and $\rho_-^n=(-)^n\rho_+^{-n}$:
\begin{equation*}
f_n=A_+\rho_+^n+A_-\rho_-^n\Leftrightarrow \begin{cases} f_0=A_+ + A_- = 0,\\ f_1 = A_+\rho_+ + A_-\rho_-=1,\end{cases}
\end{equation*}
whence we find that
\[
A_+ = -A_-=\frac{1}{\rho_+ -\rho_-}=\frac{1}{\sqrt{5}},
\]
therefore,
\begin{equation}\label{solfib1}
f_n=\frac{\rho_+^n-(-)^n\rho_+^{-n}}{\sqrt{5}}.
\end{equation}
It's quite fascinating that the l.h.s.\ of this expression is an integer!

The eigenvalue $\rho_+>1$ is known as the ``golden ratio'' (often denoted by $\phi$ in the literature) and it's straightforward to show that $f_{n+1}/f_n\to\rho_+$, as $n\to\infty$.

Furthermore, it can be shown, by induction, that the elements of ${\sf A}^n$ are, in fact, the Fibonacci numbers themselves, arranged as follows
\begin{equation*} 
 {\sf A}^n = \left(\begin{matrix} f_{n-1} & f_n\\ f_n & f_{n+1}\end{matrix}\right).
\end{equation*}
One reason this expression is useful is that it implies that $\det {\sf A}^n = (-)^n=f_{n-1}f_{n+1}-f_n^2$.

For $n=2l+1$, we remark that this relation takes the form $f_{2l+1}^2=1+f_{2l}f_{2l+2}$.

Now, since $f_{2l+1}$ and $f_{2l+2}$ are successive iterates, they're coprime, which implies, that $f_{2l+1}^2\equiv 1\,\mathrm{mod}\,f_{2l+2}$.

Therefore, the sequence of pairs,
$(M_l=f_{2l+1},N_l=f_{2l+2})$, where $l=1,2,3,\ldots$, satisfy all of the requirements and the
corresponding limiting ratio, $L/R_{{\rm AdS}_2}$, can be found analytically. It is, indeed, equal to $\rho_+=\big(1+\sqrt{5}\big)/2$, the golden ratio.

 In the next subsection we shall consider the so-called $k$-Fibonacci sequences, which will be important for obtaining other values for the ratio $L/R_{{\rm AdS}_2}$, as well as for removing the IR cutoff.

\subsection[Removing the IR cutoff using the generalized k-Fibonacci sequences]{Removing the IR cutoff using the generalized $\boldsymbol{k}$-Fibonacci sequences}\label{contfrac}

It's possible to generalize the Fibonacci sequence in the following way:
\begin{equation*}
g_{n+1}=kg_n+g_{n-1}
\end{equation*}
with $g_0=0$ and $g_1=1$ and $k$ an integer. This is known as the ``$k$-Fibonacci'' sequence~\cite{Horadam}.

We may solve for $g_n\equiv C\rho^n$; the characteristic equation for $\rho$, now, reads
\begin{equation*}
\rho^2-k\rho-1=0\Leftrightarrow\rho_\pm(k)=\frac{k\pm\sqrt{k^2+4}}{2}
\end{equation*}
and express $g_n$ as a linear combination of the $\rho_\pm$:{\samepage
\begin{equation*}
g_n = A_+\rho_+(k)^n + A_-\rho_-(k)^n=\frac{\rho_+(k)^n-(-)^n\rho_+(k)^{-n}}{\sqrt{k^2+4}}
\end{equation*}
that generalizes equation~(\ref{solfib1}).}

In matrix form
\begin{equation*}
\left(\begin{matrix} g_n\\ g_{n+1}\end{matrix}\right)=\underbrace{\left(\begin{matrix} 0 & 1\\ 1 & k\end{matrix}\right)}_{{\sf A}(k)}\left(\begin{matrix} g_{n-1}\\g_n\end{matrix}\right).
\end{equation*}
Similarly as for the usual Fibonacci sequence, we may show, by induction, that
\begin{equation}\label{Akn}
{\sf A}(k)^n=\left(\begin{matrix} g_{n-1} & g_n \\ g_n & g_{n+1}\end{matrix}\right).
\end{equation}
We find that $\det {\sf A}(k)^n=(-)^n$, therefore that $g_{2l+1}^2\equiv 1\,\mathrm{mod}\,g_{2l+2}$; thus, $g_{2l+2}/g_{2l+1}\to L/R_{{\rm AdS}_2}\allowbreak =\rho_+(k)$, where the eigenvalue of ${\sf A}(k)$, $\rho_+(k)$, that's greater than 1, of course, depends on $k$. In this way it is possible to obtain infinitely many values of the ratio $L/R_{{\rm AdS}_2}$.
Furthermore, we have determined $L$, the IR cutoff, in terms of $R_{{\rm AdS}_2}$.

What is remarkable is that, using the
additional parameter, $k$, of the $k$-Fibonacci sequence, it is, now, possible to remove the IR cutoff, as well, since it is possible to send $L\to\infty$, as $k\to\infty$, keeping $R_{{\rm AdS}_2}$ fixed.

While $k$ remains finite, the periodic box cannot be removed and, in the continuum limit, $a\to 0$, we obtain infinitely many foldings of the AdS$_2$ surface inside the box due to the mod $L$ operation.

The Fibonacci sequence, taken mod $N$, is periodic, with period $T(N)$; this turns out to be a ``random'' function of $N$. The ``shortest'' periods, as has been shown by Falk and Dyson~\cite{falk_dyson}, occur when $N=F_l$, for any $l$. In that case, $T(F_l)=2l$.

We may, thus, ask the same question for the $k$-Fibonacci sequence, where the ratio of its successive elements, $g_{n+1}/g_n$ tend to the so-called ``$k$-silver ratio'',
\begin{equation*}
\rho_+(k)=\frac{k+\sqrt{k^2+4}}{2}
\end{equation*}
(the ``silver ratio'' is $\rho_+(k=2)$).

From equation~(\ref{Akn}), taking mod $g_l$ on both sides, we find that, when $n=l$, the matrix becomes $\pm$(the identity matrix), so $T(g_l)=l$ or $2l$, respectively; thereby generalizing the Falk--Dyson result for the $k$-Fibonacci sequences.

\section{Conclusions and open issues}\label{concl}
In this work we have proposed a construction of the continuum AdS$_2$ radial and time near horizon geometry of extremal black holes from a finite and arithmetic geometry, AdS$_2[N]$, for every integer $N$. This entails the introduction of UV and IR cutoffs, respectively $a=R_{{\rm AdS}_2}/M$ and $L=a N$, where $L>R_{{\rm AdS}_2}$ is the size of the periodic box, that encloses the one-sheeted hyperboloid.

The periodic box and the UV cutoff deform the ${\rm PSL}(2,\mathbb{R})$ isometry of AdS$_2$ to the finite group, ${\rm PSL}_2[\mathbb{Z}_N]$, which is the mod $N$ reduction of ${\rm PSL}_2[\mathbb{Z}]$.

The elements of this finite group are discrete maps and describe the evolution operators of the avatars of infalling observers, with proper time the iteration time of the corresponding maps.

The notion of locality in gravity is expressed in terms of the diffeomorphism invariance of the gravitational action. This implies the absence of local observables and it is only in the case of well defined asymptotic behavior of the metric, either conformal or not, that globally defined observables do exist which can characterize the gravitational background. In the case of the AdS/CFT correspondence, the holographic dualities are restricted by the UV/IR correspondence and locality cannot hold simultaneously on the boundary as well as in the bulk.

On the other hand, the present efforts to understand the near horizon region, as well as the interior and the exterior of black holes, which are asymptotically anti-de Sitter, rely exclusively on the boundary CFT point of view. This approach, however, reaches its limit when attempting to resolve features, beyond the Planck scale, where no formalism for performing reliable calculations is, to date, available.

For these reasons our program for using the arithmetic of finite geometries has an intrinsic interest as an alternative way for reconstructing bulk spacetimes, as emerging in an appropriate scaling limit thereof. Among the main advantages are:
\begin{itemize}\itemsep=0pt
\item As shown in this paper this scaling limit is the correct one, in that the usual, continuum, AdS$_2$ geometry is recovered~-- this is a very important sanity check.
\item
The relation of finite geometries to quantum information theory and their representation as quantum circuits with measurable complexity~\cite{Jefferson:2017sdb,Preskill:1999he, Susskind:2018fmx,Susskind:2018pmk}.
\item
It, also, provides a framework for quantitatively studying the eigenstate thermalization hypothesis~\cite{SrednickiETH} and the fast scrambling bound~\cite{Axenides:2016nmf}.

Due to the modular arithmetic, an intrinsic number theoretic randomness appears in the geometry itself, as well as in the dynamics of wave packets with finite-dimensional Hilbert space~\cite{Athanasiu:1994fv}.
\end{itemize}

In the present work we established that the modular geometry AdS$_2^M[\mathbb{Z}_N]$ is a useful toy model that realizes many of the basic properties, for the near horizon geometries of extremal/near extremal black holes, in that it can be shown to lead to the definition of the correct continuum limit.

Along the way, we discussed interesting methods to localize and count the integral points of the AdS$_2$ continuous geometry and to characterize the points of AdS$_2^M[\mathbb{Z}_N]$ as equivalence classes of the AdS$_2$ integral points modulo the congruent modular group~$\Gamma[N]$. The continuum limit of the modular geometry AdS$_2^M[\mathbb{Z}_N]$ was constructed explicitly, using infinite sequences of UV/IR cutoffs $(M_n,N_n)$, $n=1,2,\ldots$, taken from the integer
sequences of the $k$-Fibonacci numbers.

The sequence of UV cutoffs$,N_n$ describes the dimension of the Hilbert space of states of single-particle probes and, in the case of $k$-Fibonacci sequence, $k=1,2,\ldots$, the dynamics of the corresponding cat maps saturates the scrambling time bound
with a Lyapunov exponent that grows logarithmically with~$k$.

Among the open issues of our approach we may mention:
\begin{itemize}\itemsep=0pt
\item Our approach to the continuum geometry consists in showing that the ratio $L/R_{{\rm AdS}_2}$ can take certain (though infinitely many) values; realizing the construction for arbitrary values of this ratio remains an open issue.
\item The distribution of the integral points of AdS$_2$ seems to have quite interesting properties~\cite{lowryduda2017variants,oh2014limits}.
\item The sequence of AdS$_2^M[\mathbb{Z}_N]$ modular geometries, for $N\to\infty$, can be studied in the framework of profinite integers and groups. The limit of this sequence belongs to
the set AdS$_2\big[\widehat{\mathbb{Z}}\big]$, where $\widehat{\mathbb{Z}}$ is the set of profinite integers.\footnote{We would like to thank one of the referees of this paper, who stressed the relevance of the profinite integers for the present construction.} The sequence of the UV/IR pairs can be lifted to the so-called profinite Fibonacci integers. Their limit can be, also, studied in the corresponding topology~\cite{lenstra2003profinite,lenstra2005profinite,lenstra2016profinite}.\footnote{We would like to thank Professor H.W.~Lenstra for correspondence on this point.}
\item The extension to modular discretizations of higher-dimensional AdS/CFT duals, using the corresponding arithmetic isometry groups.
\item Another possible direction to this end could the relation of the modular with the $p$-adic AdS$_2$ geometry~\cite{Gubser:2017pyx, Ma:2018mln} and referencs therein.
\item The extension to the BTZ black hole.
\item Describing de Sitter spacetimes~\cite{Witten:2001kn} using arithmetic geometry.
\item Many-body probe systems and the ensuing questions related to their entanglement and the time behaviour of their OTOC bulk quantum correlators.
\end{itemize}
These issues are technically feasible and physically interesting with available tools.

\subsection*{Acknowledegements} This work spanned many places and benefitted from discussions with many people. We would like to thank, in particular, Costas Bachas and John Iliopoulos at the LPTENS, Gia Dvali, Alex Kehagias, Boris Pioline, Kyriakos Papadodimas and Eliezer Rabinovici at CERN. We acknowledge the warm hospitality at Ecole Normale Sup\'erieure, Paris, the Theory Division at CERN and the Institute of Nuclear and Particle Physics of the NRCPS ``Demokritos''.

We would, also, wish to thank Professor H.W.~Lenstra for illuminating correspondence and the referees of our paper for the interest they showed in our submission and their detailed reports, that gave us the opportunity to sharpen our arguments and improve significantly the presentation.

\pdfbookmark[1]{References}{ref}
\LastPageEnding

\end{document}